\newtheorem{prop}{Proposition}[section]
\newtheorem{rem}{Remark}[section]
\newtheorem{exmp}{Example}[section]
\numberwithin{equation}{section}
\title{{Necessary and Sufficient Conditions for Time Reversal Symmetry in Presence of Magnetic Fields}}
\author{
  Davide Carbone\\
  Dipartimento di Fisica\\ Università di Torino\\
  Via Pietro Giuria 1, 10125 Torino, Italy\\ \\
  email: \texttt{davide.carbone@edu.unito.it}
   \And
 Lamberto Rondoni \\
  Dipartimento di Scienze Matematiche\\ Politecnico di Torino\\ Corso Duca degli Abruzzi 24, 10129 Torino, Italy \\ \\Istituto Nazionale di Fisica Nucleare\\ Sezione di Torino\\Via Pietro Giuria 1, 10125 Torino, 
 Italy \\ \\
 email: \texttt{lamberto.rondoni@polito.it} \\
}
\begin{document}
\maketitle

\begin{abstract}
Time reversal invariance (TRI) of particles systems has many consequences, among~which the celebrated Onsager reciprocal relations, a milestone in Statistical Mechanics dating back to 1931. Because for a long time it was believed that (TRI) dos not hold in presence of a magnetic field, a modification of such relations was proposed by Casimir in 1945. Only in the last decade, the~strict traditional notion of reversibility that led to Casimir's work has been questioned. It was then found that other symmetries can be used, which allow the Onsager reciprocal relations to hold without modification. In this paper we advance this investigation for classical Hamiltonian systems, substantially increasing the number of symmetries that yield TRI in presence of a magnetic field. We~first deduce the most general form of a generalized time reversal operation on the phase space of such a system; secondly, we express sufficient conditions on the magnetic field which ensure TRI. Finally, we examine common examples from statistical mechanics and molecular dynamics. Our main result is that TRI holds in a much wider generality than previously believed, partially explaining why no experimental violation of Onsager relations has so far been reported.
\end{abstract}

\keywords{hamiltonian dynamics \and magnetic field \and correlation functions \and onsager reciprocal relations}

\section{Introduction}
\label{sec:Introduction}
The relation between time reversal invariance (TRI) and Onsager reciprocal relations \cite{onsager1931reciprocal,onsager1931reciprocal2}, for~systems coupled with a magnetic field is a topic well investigated since Casimir's article \cite{RevModPhys.17.343}. A cardinal contribution was given by Kubo in Refs. \cite{kubo1957statistical,kubo1959some,kubo1966fluctuation} who used the usual time reversal operation
\begin{equation}
\label{eq:0}
   {\mathcal{T}_B(\boldsymbol{r},\boldsymbol{p},t;\boldsymbol{B})=(\boldsymbol{r},-\boldsymbol{p},-t;-\boldsymbol{B})}
\end{equation} for the correlator of two classical observables $\phi$ and $\psi$ in the stationary state, where $\boldsymbol{r},\boldsymbol{p}$ collectively represent coordinates and momenta of the particles of the system of interest. He obtained the following chain of equalities:
\begin{equation}
    \label{eq:1}
    \langle\phi(0)\psi(t)\rangle_{\boldsymbol{B}}=\eta_\phi\eta_\psi\langle\phi(0)\psi(-t)\rangle_{-\boldsymbol{B}}=\eta_\phi\eta_\psi\langle\phi(t)\psi(0)\rangle_{-\boldsymbol{B}}
\end{equation}

Here the factors $\eta_\psi$ and $\eta_\phi$ are, respectively, the signatures of the {\em observables}  {$\psi$ and $\phi$}, 
 i.e., of two generic functions defined on the phase space, with regard to the transformation $\mathcal{T}_B$. Moreover, the~angular brackets represent the average with respect to the equilibrium probability distribution in phase space.

Generalized time reversal transformations different from $\mathcal{T}_B$ are already given by Lax in Ref. \cite{lax}, but in the previous century the statement that crystallized in the literature was that only $\mathcal{T}_B$ allows the reciprocal relations to hold. Unfortunately, this only leads to a relation between two different systems as stressed by the subscripts in \eqref{eq:1}, one with magnetic field $\boldsymbol{B}$ and the other with opposite field, which leads to Casimir's modification of Onsager reciprocal relations. As a consequence, the predictive power of these relations is quite limited, compared to that of the original relations.

Recently, however, a different perspective has been adopted in Refs. \cite{rondoe,rondom,bonella2017time} for classical systems coupled with a constant magnetic field along an axis and in Ref. \cite{casa} for a magnetic field dependent on one space coordinate. In particular, it was shown that suitable time reversal operations exist that yield~\eqref{eq:1} without the inversion of the field. Furthermore, the quantum case, in the presence of a constant magnetic field has been similarly treated in Ref. \cite{rondoquantu}.

As we will show in detail, the generalized time reversal transformations that were investigated do not exhaust the set of all possible operations leading to TRI. The first objective of this paper is to identify the most general time reversal operation compatible with a classical Hamiltonian system. After~this, we analyze the minimal coupling with a generic magnetic field, formulating sufficient conditions for the magnetic field and for the force potential that make the Onsager reciprocal relations~hold. 

This theoretical result is relevant also in the context of quantum mechanics, that will be dealt with in a future paper. For exemplary instance, in Ref. \cite{jacquod2012onsager} B{\"u}ttiker and collaborators analyzed quantum systems using the ``tenfold way'' developed by Zirnbauer in Ref. \cite{zirnbauer2010symmetry}, which is founded on the idea that the validity of the Onsager reciprocal relations necessarily requires microreversibility,  i.e.,  Onsager's notion that: ``if~the velocities of all the particles present are reversed simultaneously the particles will retrace their former paths, reversing the entire succession of configurations'', which is to say that $\mathcal{T}(\boldsymbol{r},\boldsymbol{p},t)=(\boldsymbol{r},-\boldsymbol{p},-t)$ holds. As demonstrated in Refs. \cite{rondoe,rondom,bonella2017time}, this is not always required for statistical properties, because other symmetries may as well do. In this paper we show that further generalized time reversal operations exist that can be used in Linear Response Theory and beyond.

In Section \ref{sec:Theory and results}, we derive and discuss our results about time reversal invariant (TRI) systems, in~presence of magnetic fields, and we introduce our methods of investigation. In particular, we~provide sufficient conditions for the magnetic fields that allow TRI. In Section \ref{sec:Conclusions}, we summarize such results and outline future developments.
 
\section{Theory and Results}
\label{sec:Theory and results}
This section is organized as follows: Section \ref{sec:Onsager reciprocal relations and T-symmetry} summarizes previous results on TRI in presence of a magnetic field and its relevance for the Onsager reciprocal relations and other statistical equalities. Section \ref{sec:Dynamics and Transformations} identifies the general form of a TRI operation for a system coupled with a magnetic field $\bf B$, and gives sufficient conditions on $\bf B$ for such operations to exist. This is connected with the question of gauge freedom, which is analyzed in Section \ref{sec:Gauge}. Section \ref{sec:Magnetic Field} closes the loop concerning sufficient conditions, expressing them directly from the point of view of the magnetic field. Finally, various~examples of potentials are used to illustrate our theoretical results.
\subsection{Onsager Reciprocal Relations and T-Symmetry}
\label{sec:Onsager reciprocal relations and T-symmetry}
\noindent

A dynamical system $S^t : \Omega \to \Omega$, on a phase space $\Omega$ with $t \in \mathbb{R}$, is called TRI if there exists a map $\mathcal{M}:\Omega\xrightarrow{}\Omega$, {such that:} 
\begin{equation}
    \label{timerev1}
    \mathcal{M}S^t=S^{-t}\mathcal{M} \, , \quad \mbox{and }  ~~~  \mathcal{M}^2=I
\end{equation}

The operator $S^t$ is the time evolution operator on the phase space, which moves every initial condition $\Gamma \in \Omega$ to the corresponding evolved phase point $S^t \Gamma \in \Omega$. As $S^t$ and $S^{-t}$ are operators related to the same dynamics, forward in one case and backward in the other, $\mathcal{M}$ in \eqref{timerev1} has to preserve the equations of motion and so the Hamiltonian, cf.\ Section~\ref{sec:Dynamics and Transformations}. 

{As shown for instance in Ref. \cite{rondoe}, the canonical time reversal operation,  i.e., $\mathcal{M}(\boldsymbol{r},\boldsymbol{p})=(\boldsymbol{r},-\boldsymbol{p})$, does not verify Equation \eqref{timerev1} when $S^t$ describes the evolution of a system in a magnetic field. While~the equations of motion are preserved by $\mathcal{T}_B$,  i.e., by inverting momenta and magnetic field together with time, that operation}
 means dealing with different systems, subject to different magnetic fields, rather~than with a single system in given magnetic field. Thus,  {one only obtains relations such as the Onsager--Casimir ones, \eqref{eq:1}, that do not quantify the properties of a system of interest: they merely link non-quantified properties of two different systems in two different magnetic fields.}

Given the observables  
$\phi, \psi :\Omega\xrightarrow{}\mathbb{R}$, their correlator with respect to a probability distribution in phase space, $\rho$, is defined by:
\begin{equation}
    \label{correl:1}
    \langle\phi(0)\psi(t)\rangle_{\boldsymbol{B}}=\int_\Omega dX\rho(X)\phi(X)\psi(S^t X)
\end{equation}

In case an operation $\mathcal{M}$ verifying Equation \eqref{timerev1} exists, Onsager reciprocal relations hold, as can be demonstrated analyzing the correlator \eqref{correl:1}. This can be seen through the following steps: first, $\mathcal{M}$~is used to change variable within the integral, setting $X=\mathcal{M}Y$, whose Jacobian determinant is $1$, because $\cal M$ is an isometry. It follows that:
\begin{equation}
    \label{correl:2}
    \langle\phi(0)\psi(t)\rangle_{\boldsymbol{B}}=\int_\Omega dY\rho(\mathcal{M}Y)\phi(\mathcal{M}Y)\psi(S^t \mathcal{M}Y)
\end{equation}

Suppose that $\phi$ and $\psi$  respectively possess signatures $\eta_\phi$ and $\eta_\psi$ under the action of $\mathcal{M}$, and~that the probability density $\rho$ is even under $\mathcal{M}$, as appropriate for an equilibrium distribution of a Hamiltonian particles system, such as the canonical ensemble. This leads to the result showed in~Ref.~\cite{rondoe}:
\begin{equation}
    \label{correl:3}
    \langle\phi(0)\psi(t)\rangle_{\boldsymbol{B}}=\eta_\phi\eta_\psi\int_\Omega dY\rho(Y)\phi(Y)\psi(S^{-t} Y)=\eta_\phi\eta_\psi\langle\phi(0)\psi(-t)\rangle_{\boldsymbol{B}}
\end{equation}

Using the invariance for time translation of the equilibrium state,  i.e., translating forward by a time $t$ the last term of \eqref{correl:3}, we come to the final result:
\begin{equation}
    \label{correl:4}
    \langle\phi(0)\psi(t)\rangle_{\boldsymbol{B}}=\eta_\phi\eta_\psi\langle\phi(t)\psi(0)\rangle_{\boldsymbol{B}}
\end{equation}

This is related to the Onsager theory of linear response as follows: given the  macroscopic observables $\alpha_i$, $i=1,...,n$, and entropy $\mathcal{S}$ of a system subjected to (relatively) small thermodynamic forces $X_j$, $j=1,...,n$, one may write:
\begin{equation}
    \label{ons:1}
    \dot{\alpha_i}=\sum_j L_{ij}X_j \;\;\;\;\;\;\; X_j=\frac{\partial\mathcal{S}}{\partial\alpha_j}
    \, ; \quad i,j = 1, ... , n
\end{equation}
where the linear transport coefficients are obtained via the Green--Kubo integrals of the
corresponding correlators (see Ref. \cite{Bettolo}). Therefore, the symmetry properties of $L_{ij}$ descend from those of $\langle\alpha_i(0)\alpha_j(t)\rangle$. If $\eta_i$ and $\eta_j$ are the signatures of the macroscopic observables, we have: 
\begin{equation}
    \label{ons:2}
    \langle\alpha_i(0)\alpha_j(t)\rangle_{\boldsymbol{B}}=\eta_i\eta_j\langle\alpha_i(t)\alpha_j(0)\rangle_{\boldsymbol{B}} 
    \, ; \quad i,j = 1, ... , n
\end{equation}
that, after integration in time, yield the Onsager reciprocal relations:
\begin{equation}
L_{ij}=\eta_i\eta_jL_{ji} 
\, ; \quad i,j = 1, ... , n
\end{equation}

Our goal is to identify the general form of a time reversal transformation, as well as the conditions under which Onsager symmetry may be obtained in presence of a magnetic field.

\subsection{Dynamics and Transformations}
\label{sec:Dynamics and Transformations} 
Consider a system of particles coupled with an external static magnetic field and subject to forces expressed by a potential. The corresponding Hamiltonian writes:
\begin{equation}
\label{eq:2.2}
   H=\sum_{i=1}^N\left[\frac{(\boldsymbol{p}_i-q_i\boldsymbol{A}(x_i,y_i,z_i))^2}{2m_i}\right]+U(\boldsymbol{X},\boldsymbol{P},\boldsymbol{C}) 
\end{equation}
where $N$ is the number of particles, $q_i$ and $m_i$ are the charge and the mass of the $i$-th particle, the~first addend is the coupling to the magnetic field and $U(\boldsymbol{X},\boldsymbol{P},\boldsymbol{C})$ is the force potential. In general, $U$~depends on $2dN$ coordinates $(\boldsymbol{X},\boldsymbol{P})$, if each particle has got $d$ degrees of freedom, but it may also depend on a set of parameters $\boldsymbol{C}$. Without loss of generality, let us assume that the particles move in 3-dimensional space and that $d=3$. In the following we are going to use $A_k(x_i,y_i,z_i)$, with $k=1,2,3$, to denote the components of the vector potential $\boldsymbol{A}(x_i,y_i,z_i)$.

Let us begin identifying the possible time reversal operations for a Hamiltonian system, in general. Later, we will focus on those that are not broken by magnetic field.
\begin{prop}
\label{prop:0}
Take the 6-dimensional space of a single particle, with coordinates and momenta $(x,y,z,p_x,p_y,p_z)$, and let $I$ be the identity operator on this space. The general form of a time reversal operator $\mathcal{T}$, for classical Hamiltonian dynamics, writes: 
\begin{equation}
\label{eq:2.4}
\mathcal{T}(x,y,z,p_x,p_y,p_z)=P(s_1x,s_2y,s_3z,-s_1p_x,-s_2p_y,-s_3p_z)
\end{equation}
where $P$ is a permutation of coordinates and of their conjugate momenta, such that $P^2=I$,
and $s_i$, which equals $1$ or $-1$, takes opposite values in front of coordinates and momenta.
\end{prop}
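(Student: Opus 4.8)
The plan is to read off the stated form from the two defining conditions of a time reversal in \eqref{timerev1}---the involution property $\mathcal{T}^2=I$ and the flow-reversal identity $\mathcal{T}S^t=S^{-t}\mathcal{T}$---together with the two structural facts already invoked in the text: that $\mathcal{T}$ must preserve the Hamiltonian and that it is an isometry of phase space (the property used in the derivation of \eqref{correl:2}). First I would linearise the flow identity by differentiating $\mathcal{T}S^t=S^{-t}\mathcal{T}$ at $t=0$; writing Hamilton's equations as $\dot{\Gamma}=J\nabla H(\Gamma)$ with the canonical symplectic structure $J$, this says precisely that $\mathcal{T}$ sends the Hamiltonian vector field to its opposite, $\mathcal{T}_{*}X_H=-X_H$. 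Using $H\circ\mathcal{T}=H$, a chain-rule computation converts this into the anti-canonical (anti-symplectic) relation $\mathcal{T}^{\top}J\mathcal{T}=-J$ for the linear part of $\mathcal{T}$, which I take to be orthogonal by the isometry property, fixing the origin of phase space.

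Next I would pass to the $2\times2$ block form of the $6\times6$ matrix of $\mathcal{T}$ adapted to the splitting $(\boldsymbol{r},\boldsymbol{p})$. The requirement that $\mathcal{T}$ preserve the Hamiltonian \eqref{eq:2.2} forces the off-diagonal blocks to vanish: the momenta enter through a positive-definite quadratic form while the coordinates enter only through $U$ and $\boldsymbol{A}(\boldsymbol{r})$, so a coordinate cannot be mapped into a momentum without destroying one of these structures. Writing then $\mathcal{T}=\mathrm{diag}(A,D)$, the anti-symplectic relation collapses to $D=-A$, while orthogonality and $\mathcal{T}^2=I$ give $A=A^{\top}$, $A^2=I$ and $A\in O(3)$. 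This already yields the characteristic sign rule---coordinates transform by $A$ and conjugate momenta by $-A$, i.e. opposite signs on each conjugate pair---which I would also motivate physically by noting that time reversal reverses velocities: if $q_k\mapsto s_k q_k$, then $\dot{q}_k\mapsto -s_k\dot{q}_k$ after the inversion of $t$, hence $p_k\mapsto -s_k p_k$.

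It then remains to identify $A$ with a signed permutation $PS$, where $S=\mathrm{diag}(s_1,s_2,s_3)$ and $P$ permutes the axes. This is the delicate step and the one I expect to be the main obstacle: the conditions accumulated so far pin $A$ down only as a symmetric orthogonal involution, a family that still contains oblique reflections, which are not signed permutations. To exclude them I would invoke the fact that the coordinate axes are physically distinguished by $U$ and by the minimal coupling to $\boldsymbol{A}(\boldsymbol{r})$, so that demanding $A$ be a symmetry of these coordinate-dependent terms leaves only axis permutations and reflections, i.e. $A=PS$. Substituting $D=-A=-PS$ reproduces \eqref{eq:2.4}, and the involution condition $\mathcal{T}^2=I$ then reduces to $P^2=I$ together with the compatibility of the signs $s_i$ under $P$. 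I expect this last reduction---cutting the continuous algebraic family down to the discrete signed-permutation form---to be exactly where the real work lies, and it is also the natural point of contact with the sufficient conditions on $\boldsymbol{B}$ and $U$ established in the following sections.
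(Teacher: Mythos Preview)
Your approach is considerably more elaborate than the paper's. The paper's own proof consists of two brief observations: (i) the involution condition $\mathcal{T}^2=I$ forces $P^2=I$ and $s_i=\pm1$; (ii) the structure of Hamilton's equations \eqref{eq:2.3}---time reversal flips the sign of the right-hand sides---forces opposite signs on each coordinate and its conjugate momentum, under the standing assumption that $H$ itself is preserved. There is no linearisation of the flow, no derivation of the anti-symplectic relation $\mathcal{T}^{\top}J\mathcal{T}=-J$, and no explicit use of the isometry property; the form \eqref{eq:2.4} is effectively postulated, and the proof only checks that the two defining requirements are consistent with it.

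The gap you flag---that symmetric orthogonal involutions on $\mathbb{R}^3$ form a continuous family (reflections in arbitrary planes, $\pi$-rotations about arbitrary axes), not merely the discrete set of signed permutations---is real, and the paper does not close it either. It simply restricts from the outset to maps acting coordinate-wise with possible axis permutations; the subsequent enumeration of twenty operations and the matching conditions \eqref{eq:2.26}--\eqref{eq:2.29} only make sense under that tacit restriction. Your proposed resolution, that $U$ and $\boldsymbol{A}$ physically distinguish the coordinate axes, is reasonable in spirit but is Hamiltonian-dependent, whereas the proposition is phrased as if it were a general structural fact about Hamiltonian dynamics. In short: your argument is more rigorous and more honest about what is actually being proved, and the step you single out as ``the main obstacle'' is precisely the one the paper leaves implicit.
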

\begin{proof}
That $P^2$ be the identity and that $s_i$ be $\pm1$ is imposed by the fact that $\mathcal{T}^2=I$,  i.e., that a time reversal transformation must be involutional. That a coordinate and its respective momentum have opposite sign is imposed by the form of the Hamilton equations:
\begin{equation}
\label{eq:2.3}
\begin{dcases}
\frac{\partial H}{\partial p_i}=\dot{x}^i\\
\frac{\partial H}{\partial x^i}=-\dot{p_i}
\end{dcases}
\end{equation}

In fact, assuming that the Hamiltonian itself verifies TRI, an overall minus sign arises when time is reversed. Then, in order to preserve the form of the equations of motion, a minus sign has to distinguish $x^i$ from its
conjugate momentum $p_i$.
\end{proof}
Note that $P$ in Equation \eqref{eq:2.4} is not a permutation of six elements but it acts in a block diagonal way on the coordinates and in the same way on the momenta. For instance, assuming $P$ swaps $x$ and $y$, it~does the same with the corresponding momenta:
\begin{equation}
    \label{permex}
    (x,y,z,p_x,p_y,p_z)\xrightarrow[]{P}(y,x,z,p_y,p_x,p_z)
\end{equation}

This action comes in addition to the compulsory alternation of signs between coordinates and conjugated momenta produced by the $s_i$ factors. 

In order to enumerate how many different time reversal transformations exist, let us represent them in matrix form. As positions and momenta are bound to be distinguished by a minus sign, it~suffices to consider the 3-dimensional space of positions, hence to consider a 3 $\times$ 3 matrix, $\mathcal{M}_d$. The~action of $\cal T$ on the corresponding momenta will be given by $-\mathcal{M}_d$. 

First, suppose $P$ is the identity, so that $\mathcal{M}_d$ takes the diagonal form:
\begin{equation}
\label{eq:2.5}
\mathcal{M}_d=
\begin{pmatrix}
s_1 & 0&0\\
0 &s_2 &0\\
0 &0 & s_3
\end{pmatrix}
\end{equation}

In this case, there are eight possible choices for 
$\cal T$, as shown in Ref.~\cite{rondom}. For example, the usual time reversal operation that preserves the coordinates and reverses the momenta corresponds to $s_1=s_2=s_3=1$.

If, on the other hand, $P\neq I$, the total number of permutations of three elements is the order of the discrete group $S_3$,   i.e., $3!=6$. But the cyclical and the counter-cyclical permutations are not involutions, and only the swap permutations remain:
\begin{equation}
\label{eq:2.6}
\mathcal{M}_{xy}=
\begin{pmatrix}
0 & s_P&0\\
s_P &0 &0\\
0 &0 & s_3
\end{pmatrix}
\end{equation}
\begin{equation}
\label{eq:2.7}
\mathcal{M}_{yz}=
\begin{pmatrix}
s_1 & 0&0\\
0 &0 &s_P\\
0 &s_P & 0
\end{pmatrix}
\end{equation}
\begin{equation}
\label{eq:2.8}
\mathcal{M}_{xz}=
\begin{pmatrix}
0 & 0&s_P\\
0 &s_2 &0\\
s_P &0 & 0
\end{pmatrix}
\end{equation}
where $s_P=\pm 1$ and the subscript on $\cal M$ 
identifies the swap. The non-zero elements in the $2$ $\times$ $2$ permutation blocks must own the same sign to ensure that the transformation squared is the identity. This amounts to 12 transformations: four for each of the matrices \eqref{eq:2.6}, \eqref{eq:2.7} and \eqref{eq:2.8}. Adding these to the previous 8 transformations, we obtain a total of 20 generalized time reversal transformations, that can be used to derive the Onsager reciprocal relations, following e.g., the approach of Ref. \cite{rondom}.

For the invariance of the Hamiltonian, let us directly consider the magnetic field, $\mathbf{B}\ne 0$. First, let the particles of the system be  coupled to $\mathbf{B}$ only, so that  $U(\boldsymbol{X},\boldsymbol{P},\boldsymbol{C})=0$.
As there are $20$ possible transformations for each particle subspace, one can choose a time reversal operation among $20^N$. For~instance, let $\mathcal{M}_1$ and $\mathcal{M}_2$ be two matrices that represent two suitable 
transformations on 6-dimensional subspaces; one may combine them in a single transformation $O$ acting on the entire phase space as:
\begin{equation}
\label{totalphasespace}
    O(\boldsymbol{X},\boldsymbol{P})=(\mathcal{M}_1\boldsymbol{x}_1,-\mathcal{M}_1\boldsymbol{p}_1,...,\mathcal{M}_2\boldsymbol{x}_k,-\mathcal{M}_2\boldsymbol{p}_k,... ,\mathcal{M}_2\boldsymbol{x}_N,-\mathcal{M}_2\boldsymbol{p}_N)
\end{equation}
where a special combination of the two operations has been chosen. By definition, $O$ automatically satisfies the conditions \eqref{timerev1}, and can be used under the Kubo correlation integral. 

To find involutions that act on the entire phase space, not as block diagonal single particle matrices, one may consider non-diagonal time reversal operations, that act on the Hamiltonian \eqref{eq:2.2} exchanging coordinates and momenta of different particles. However, because in general  particles have different masses, $m_i \ne m_j$ for
$i\ne j$, such operations do not qualify as time reversal involutions.
For example, consider the following transformation:\vspace{12pt}
\begin{equation}
    \label{nondiag}
    \resizebox{0.95\hsize}{!}{$(\boldsymbol{x}_1,...,\boldsymbol{x}_j,\boldsymbol{x}_{j+1},...,\boldsymbol{x}_N,\boldsymbol{p}_1,....,\boldsymbol{p}_j,\boldsymbol{p}_{j+1},...,\boldsymbol{p}_N)\xrightarrow{\mathcal{M}_{nd}}(\boldsymbol{x}_1,...,\boldsymbol{x}_{j+1},\boldsymbol{x}_j,...,\boldsymbol{x}_N,-\boldsymbol{p}_1,....,-\boldsymbol{p}_{j+1},-\boldsymbol{p}_j,...,-\boldsymbol{p}_N)$}
\end{equation}
where $\boldsymbol{x}_1=(x_1,y_1,z_1)$. Writing the summation in Equation \eqref{eq:2.2} as:
\begin{equation}
    \label{summation}
    \sum_{i=1}^N\left[\frac{(\boldsymbol{p}_i-q_i\boldsymbol{A}(\boldsymbol{x}_i))^2}{2m_i}\right]=...+\frac{(\boldsymbol{p}_j-q_j\boldsymbol{A}(x_j,y_j,z_j))^2}{2m_j}+\frac{(\boldsymbol{p}_{j+1}-q_{j+1}\boldsymbol{A}(x_{j+1},y_{j+1},z_{j+1}))^2}{2m_{j+1}}+...
\end{equation}
the transformation \eqref{nondiag} yields:
\begin{equation}
    \label{summation2}
    ...+\frac{(\boldsymbol{p}_j+q_j\boldsymbol{A}(x_j,y_j,z_j))^2}{2m_{j+1}}+\frac{(\boldsymbol{p}_{j+1}+q_{j+1}\boldsymbol{A}(x_{j+1},y_{j+1},z_{j+1}))^2}{2m_{j}}+...
\end{equation}

{As the transformation \eqref{nondiag} does not act on the masses, Equation \eqref{summation2} may differ from 
the corresponding term in Equation \eqref{summation}, even in cases in which 
$\boldsymbol{A}(x_j,y_j,z_j)=\boldsymbol{A}(x_{j+1},y_{j+1},z_{j+1})$: the~Hamiltonian is not invariant under the action of $\mathcal{M}_{nd}$. }
Depending on the values of the particles masses, certain swaps may be allowed or not. In the following, we limit our investigation to the case that excludes particles swaps.

 Considering the 20 operations listed above, \eqref{eq:2.5}, \eqref{eq:2.6}, \eqref{eq:2.7} and \eqref{eq:2.8}, let us now relate them to the functional form of the vector potential of Equation \eqref{eq:2.2}. Neglecting for sake of simplicity the particle index $i$, we have:
\begin{equation}
    \label{eq:2.9}
    (\boldsymbol{p}-q\boldsymbol{A})^2=(p_x-q A_1)^2+(p_y-q A_2)^2+(p_z-q A_3)^2
\end{equation}

Under the action of the map \eqref{eq:2.4} with $P=I$, this yields:
\begin{equation}
    \label{eq:2.10}
    (-s_1p_x-q A_1(s_1x,s_2y,s_3z))^2+(-s_2p_y-q A_2(s_1x,s_2y,s_3z))^2+(-s_3p_z-q A_3(s_1x,s_2y,s_3z))^2
\end{equation}
and imposing that the result equals the expression \eqref{eq:2.9}, 
\begin{equation}
    \label{eq:2.11}
    (p_x-q A_1)^2+(p_y-q A_2)^2+(p_z-q A_3)^2=(p_x+q s_1 A_1^T)^2+(p_y+qs_2 A_2^T)^2+(p_z+qs_3 A_3^T)^2
\end{equation}
where the $A^T_k$ is the transformed component $A_k(s_1x,s_2y,s_3z)$, the Hamiltonian verifies TRI. We can thus~write:
\begin{prop}
\label{prop:nec}
The necessary and sufficient algebraic conditions for the validity of Equation \eqref{eq:2.11} are given~by:
\begin{equation}
    \label{eq:2.12}
    A_k^T=-s_kA_k \;\;\;\;\;\;\;\;\;\;\; k=1,2,3
\end{equation}
\end{prop}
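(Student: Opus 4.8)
The plan is to treat Equation \eqref{eq:2.11} not as a single scalar equality but as an identity between two polynomials in the momentum variables $p_x, p_y, p_z$, holding at every point of configuration space (the positions enter only through the functions $A_k$ and their transforms $A_k^T$). Two polynomials in $(p_x,p_y,p_z)$ agree everywhere if and only if all of their coefficients agree, so the strategy is to expand both sides and match coefficients degree by degree in the momenta.

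First I would expand each term. On the left, the square $(p_k - qA_k)^2$ contributes $p_k^2 - 2q p_k A_k + q^2 A_k^2$, while on the right $(p_k + q s_k A_k^T)^2$ contributes $p_k^2 + 2q s_k p_k A_k^T + q^2 s_k^2 (A_k^T)^2$. The crucial structural observation is that each square involves only a single momentum component, so there are no cross terms and the three directions decouple completely: the identity splits into three independent scalar identities, one for each $k$, which can be analyzed separately.

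Comparing coefficients for a fixed $k$: the quadratic coefficient is $1$ on both sides (using $s_k^2=1$), so the degree-two part matches automatically. The coefficient of the linear term $p_k$ is $-2qA_k$ on the left and $2q s_k A_k^T$ on the right; equating these and dividing by $2q \neq 0$ gives $-A_k = s_k A_k^T$, which, on multiplying through by $s_k$ and again invoking $s_k^2=1$, is exactly the claimed condition $A_k^T = -s_k A_k$. This establishes necessity. For sufficiency I would substitute $A_k^T = -s_k A_k$ back into the right-hand side: then $p_k + q s_k A_k^T = p_k - q s_k^2 A_k = p_k - q A_k$, so each term on the right reduces identically to its left-hand counterpart (equivalently, the degree-zero coefficients then satisfy $q^2(A_k^T)^2 = q^2 s_k^2 A_k^2 = q^2 A_k^2$, closing the comparison).

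There is no genuinely hard step here; the calculation is elementary once the right framing is adopted. The only point requiring care is the logical one flagged at the outset: the equality must be imposed as a polynomial identity in the momenta, valid for all $(p_x,p_y,p_z)$, rather than at a single momentum value. Without this, one could not isolate the linear coefficients and conclude that they must match separately, and the equivalence with \eqref{eq:2.12} would fail.
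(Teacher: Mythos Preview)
Your proof is correct and follows essentially the same route as the paper's: both arguments dispose of the quadratic terms (you by noting $s_k^2=1$, the paper by invoking rotational invariance of the norms $|\boldsymbol{p}|^2$ and $|\boldsymbol{A}|^2$) and then isolate the linear-in-momentum terms to obtain \eqref{eq:2.12}, the paper by setting two momenta to zero at a time and you by matching polynomial coefficients. One minor caveat: your claim that ``the identity splits into three independent scalar identities'' is slightly loose for the constant (degree-zero) terms, which only sum to a single equation, but since your necessity argument uses only the linear coefficients this does not affect the validity of the proof.
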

\begin{proof}
On the one hand, if \eqref{eq:2.12} holds, substitution immediately yields \eqref{eq:2.11}. Vice versa, starting from the validity of \eqref{eq:2.11}, one notes that the squares of $\boldsymbol p$ and $\boldsymbol A$ are squared norms of vectors in $\mathbb{R}^3$, hence are invariant under rotations, as the generalized time reversal operations are. Consequently, the~following equality holds:
\begin{equation}
    \label{nec:1}
    -p_xA_1-p_yA_2-p_zA_3=p_xs_1A_1^T+p_ys_2A_2^T+p_zs_3A_3^T
\end{equation}

As each $A_k$ only depends on $(x,y,z)$, and the conjugate momenta are independent, one may vary at will the values of $(p_x,p_y,p_z)$ in \eqref{nec:1}. Setting to zero two of them, one gets \eqref{eq:2.12} for the third. Repeating, for the~other pairs, \eqref{eq:2.12} is obtained.
\end{proof}
Actually, TRI in presence of a magnetic field is less demanding than that, because it suffices that~\eqref{eq:2.11} holds up to a gauge transformation. In other words, \eqref{eq:2.9} can be generally replaced by:
\begin{equation}
    \label{eq:2.9bis}
    [\boldsymbol{p}-q(\boldsymbol{A}+\nabla G)]^2=[p_x-q (A_1+\partial_x G)]^2+[p_y-q (A_2+\partial_y G)]^2+[p_z-q (A_3+\partial_z G)]^2
\end{equation}
where $G$ is a suitable scalar function that can be introduced without affecting the dynamics.
\begin{prop}
\label{prop:necbis}
Admitting possible gauge transformations, the  necessary and sufficient algebraic conditions for the time reversal invariance of Equation \eqref{eq:2.11} are expressed by:
\begin{equation}
    \label{eq:2.12bis}
    A_k^T=-s_k(A_k+\partial_i G) \;\;\;\;\;\;\;\;\;\;\; k=1,2,3 \text{  and  } i=x,y,z
\end{equation}
\end{prop}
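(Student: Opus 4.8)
The plan is to rerun the argument behind Proposition~\ref{prop:nec}, changing only the left-hand target: instead of asking the time-reversed kinetic term to reproduce the bare expression \eqref{eq:2.9}, I would ask it to reproduce the gauge-shifted one \eqref{eq:2.9bis}. Writing $\partial_k G$ for $\partial_x G,\partial_y G,\partial_z G$ when $k=1,2,3$, and keeping $A_k^T=A_k(s_1x,s_2y,s_3z)$ for the transform of the \emph{original} component, the identity to be imposed is
\begin{equation}
\sum_{k}\left[p_k-q\left(A_k+\partial_k G\right)\right]^2=\sum_{k}\left(p_k+q\,s_k A_k^T\right)^2 .
\end{equation}
Structurally this is \eqref{eq:2.11} with the original potential $A_k$ promoted to $A_k+\partial_k G$ on the target side, so the whole derivation of \eqref{eq:2.12} should transcribe with that replacement, yielding \eqref{eq:2.12bis}.

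For sufficiency I would simply substitute \eqref{eq:2.12bis} into the right-hand side: since $s_k^2=1$, the relation $A_k^T=-s_k(A_k+\partial_k G)$ gives $q\,s_k A_k^T=-q(A_k+\partial_k G)$, whence $(p_k+q\,s_k A_k^T)^2=[p_k-q(A_k+\partial_k G)]^2$ term by term, and summation closes the identity. For necessity I would expand both sides as quadratic polynomials in $(p_x,p_y,p_z)$. The pure quadratic parts coincide with $\sum_k p_k^2$ and cancel; matching the coefficient of each $p_k$ is legitimate because $A_k$ and $\partial_k G$ depend on positions only while the momenta are free, exactly the independence used between \eqref{nec:1} and \eqref{eq:2.12}. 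This matching gives $-q(A_k+\partial_k G)=q\,s_k A_k^T$, i.e.\ \eqref{eq:2.12bis}. The residual momentum-independent terms collapse to $\sum_k(A_k+\partial_k G)^2=\sum_k(A_k^T)^2$, which is forced automatically once the componentwise relation holds (square it and use $s_k^2=1$); it is the analogue of the norm-invariance remark invoked for \eqref{nec:1} and imposes nothing new.

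The step I expect to be delicate is not the algebra but the meaning of ``admitting gauge transformations.'' The three scalar equations in \eqref{eq:2.12bis} read $\partial_k G=-s_kA_k^T-A_k$, i.e.\ they prescribe the full gradient of a \emph{single} scalar $G$. They therefore over-determine $G$ and are solvable only if the vector field $\boldsymbol{V}$ with components $V_k=-s_kA_k^T-A_k$ is curl-free, $\partial_jV_k=\partial_kV_j$; on a simply connected domain this integrability condition is also sufficient for a gauge $G$ to exist. This compatibility requirement is the genuinely new content relative to Proposition~\ref{prop:nec}, whose condition was purely local and algebraic, and it is what ties the statement to the magnetic field: $\nabla\times\boldsymbol{V}=0$ is a constraint relating $\boldsymbol{B}$ to its time-reverse. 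I would therefore make the proposition precise by reading ``there exists a suitable $G$'' as exactly the solvability of $\nabla G=\boldsymbol{V}$, and flag the curl-free condition as the hypothesis under which \eqref{eq:2.12bis} can actually be met.
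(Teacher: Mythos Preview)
Your proof is correct and follows essentially the same route as the paper: both replace $A_k$ by $A_k+\partial_k G$ on the target side of \eqref{eq:2.11} and rerun the argument of Proposition~\ref{prop:nec}, obtaining sufficiency by direct substitution and necessity by exploiting the independence of the momenta (the paper sets $p_x=p_y=p_z=0$ to isolate the norm identity \eqref{norm} and then zeroes pairs of momenta, which is exactly your polynomial-coefficient matching in slightly different dress). Your closing paragraph on the curl-free integrability condition for $G$ is a worthwhile observation that the paper does not include in this proof; it anticipates the content of Sections~\ref{sec:Gauge}--\ref{sec:Magnetic Field}, where the existence of a suitable gauge is recast as the condition $\mathcal{M}'\boldsymbol{B}=-\boldsymbol{B}$.
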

\begin{proof}
The reasoning used in the proof of Proposition \ref{prop:nec} can be repeated. Introducing $A_i+\partial_i G$ in place of $A_i$, in the left hand side of Equation \eqref{eq:2.11}, we get:
\begin{equation}
    \label{eq:2.11tris}
    \begin{split}
         &(p_x-q (A_1+\partial_x G))^2+(p_y-q (A_2+\partial_y G))^2+(p_z-q (A_3+\partial_z G))^2=\\
         &(p_x+q s_1 A_1^T)^2+(p_y+qs_2 A_2^T)^2+(p_z+qs_3 A_3^T)^2
    \end{split}
\end{equation}

Then, direct substitution shows that \eqref{eq:2.12bis} implies \eqref{eq:2.11tris}. The inverse implication follows from the fact that Equation \eqref{eq:2.11tris} has to hold for any value of the coordinates and the momenta. In particular, considering the case $p_x=p_y=p_z=0$, we have:
\begin{equation}
    \label{norm}
    (\boldsymbol A+\nabla G)^2=[\boldsymbol{A}^T]^2
\end{equation}
and trivially the following:
\begin{equation}
    \label{nec:1bis}
    -p_x(A_1+\partial_x G)-p_y(A_2+\partial_y G)-p_z(A_3+\partial_z G)=p_xs_1A_1^T+p_ys_2A_2^T+p_zs_3A_3^T
\end{equation}

The thesis follows separately considering pairs in which two among $p_x$, $p_y$ and $p_z$ vanish.
\end{proof}

As an example, take a constant magnetic field along the $z$ axis, which corresponds to a vector potential $\boldsymbol{A}(x,y,z)= {A_0(0,x,0)=(0,A_0 x, 0)}$, and choose the Coulomb gauge. Then \eqref{eq:2.12} reduces to $s_1x=-s_2x$ for any value of $x$, that is:
\begin{equation}
    \label{eq:2.13}
    s_1=-s_2 
\end{equation}

In this case, the number of diagonal time reversal operations that preserve TRI is four, Ref. \cite{rondom}. Indeed, every constraint on the values of $s_i$ halves the number of available reversal operations. 
Then,  {applying the transformation \eqref{eq:2.6}  to \eqref{eq:2.9} yields (the same can be repeated for \eqref{eq:2.7} and \eqref{eq:2.8}):} 
\begin{equation}
    \label{eq:2.14}
    (-s_Pp_x-q A_2(s_Py,s_Px,s_3z))^2+(-s_Pp_y-q A_1(s_Py,s_Px,s_3z))^2+(-s_3p_z-q A_3(s_Py,s_Px,s_3z))^2
\end{equation}
and in the same way as Proposition \ref{prop:nec} we derive three necessary and sufficient conditions
\begin{equation}
    \label{eq:2.15}
    \begin{dcases}
    A_1(s_Py,s_Px,s_3z)=-s_PA_2(x,y,z)\\
    A_2(s_Py,s_Px,s_3z)=-s_PA_1(x,y,z)\\
    A_3(s_Py,s_Px,s_3z)=-s_zA_3(x,y,z)
    \end{dcases}
\end{equation}

In the singular case $\boldsymbol{A}(x,y,z)=(0,{A_0x},0)$, \eqref{eq:2.15} reduces to $0=\pm s_Px$, which clearly has no solution for $s_P=\pm1$; on the other hand, one observes
that the same magnetic field corresponds to the vector potential $\boldsymbol{A}(x,y,z)= {A_0}/2(-y,x,0)$, that instead leads to 
\begin{equation}
    \label{eq:2.16}
    \begin{dcases}
    -s_Px=-s_Px\\
    s_Py=s_Py
    \end{dcases}
\end{equation}
which has solution. In other words, the four transformations in the form \eqref{eq:2.6} continue to hold.
The~point is that one can use the gauge freedom to replace \eqref{eq:2.15}, and write:
\begin{equation}
    \label{eq:2.15bis}
    \begin{dcases}
    A_1(s_Py,s_Px,s_3z)=-s_P(A_2(x,y,z)+\partial_y G)\\
    A_2(s_Py,s_Px,s_3z)=-s_P(A_1(x,y,z)+\partial_x G)\\
    A_3(s_Py,s_Px,s_3z)=-s_z(A_3(x,y,z)+\partial_z G)
    \end{dcases}
\end{equation}

In the next section,
we discuss in detail the role of the gauge. 

\subsection{Gauge}
\label{sec:Gauge}
By definition, the gauge choice has no physical consequences. In our case, the dynamics does not change if the vector potential $\boldsymbol{A}$ is replaced by $\boldsymbol{A}+\nabla G$, with $G:\mathbb{R}^3\xrightarrow{}\mathbb{R}$ a scalar function. As~commonly done in this kind of magnetostatic problems, we choose the Coulomb gauge:
\begin{equation}
    \label{eq:2.17}
    \nabla\cdot\boldsymbol{A}=0
\end{equation}

The consequence of this on the physical field $\boldsymbol{B}$, hence on the conditions for TRI, can be illustrated starting from the diagonal transformations and recasting \eqref{eq:2.12bis} in the following fashion:
\begin{equation}
    \label{eq:2.18}
    (s_1A_1^T,s_2A_2^T,s_3A_3^T)=-(A_1+\partial_x G,A_2+\partial_y G,A_3+\partial_z G) = -(\boldsymbol{A}+\nabla G)
\end{equation}
where we used the fact that \eqref{eq:2.4} has to be an involution.

One can view Equation \eqref{eq:2.4} (with $P=I$ in the diagonal case) as a transformation on the  vector field $V(\mathbb{R}^3)$ of which $\boldsymbol{A}$ is an element, that transforms as a vector and not as a pseudo-vector. Hence, the~necessary conditions \eqref{eq:2.12} imply that $\boldsymbol{A}$ transformed as a vector field in $\mathbb{R}^3$ under a diagonal operation $\mathcal{M}':V(\mathbb{R}^3)\xrightarrow{}V(\mathbb{R}^3)$ has to equal $-\boldsymbol{A}$ up to a gauge transformation, and $\boldsymbol{B}$ is then mapped~to~$-\boldsymbol{B}$.

The same applies to the non diagonal transformations: we rewrite \eqref{eq:2.15bis} as 
\begin{equation}
    \label{eq:2.19}
    \begin{pmatrix}
    A_1'\\A_2'\\A_3'
    \end{pmatrix}
    =-\mathcal{M}_{xy}
    \begin{pmatrix}
    A_1+\partial_x G\\A_2+\partial_y G\\A_3+\partial_z G
    \end{pmatrix}
\end{equation}
where $A_k'=A_k(s_Py,s_Px,s_3z)$. As the inverse of the matrix $\mathcal{M}_{xy}$ equals the matrix itself, multiplying Equation \eqref{eq:2.19} side by side by $\mathcal{M}_{xy}$ the consequence is again to transform $\boldsymbol{A}$ into $-\boldsymbol{A}$ up to a gauge transformation. The same obviously holds for $\mathcal{M}_{xz}$ and $\mathcal{M}_{yz}$.

The gauge freedom can be accounted for by introducing the equivalence classes $\left[\boldsymbol{A}\right]$ of the vector potentials that lead to the same magnetic fields,  i.e., whose elements differ by the gradient of an at least twice differentiable scalar function $G(x,y,z)$. We denote by $[\boldsymbol A]_R$ an element of the class $[\boldsymbol A]$, that~corresponds to a particular choice of $G$. We can now state the following:
\begin{prop}
\label{prop:2.1}
A generalized time reversal operation $\mathcal{M}$ of form \eqref{eq:2.4}, that acts on all particles 6-dimensional subspaces, preserves TRI in the presence of a magnetic vector potential $\boldsymbol{A}$ if and only if the associated transformation defined on the 3-dimensional vector field space, $\mathcal{M}':V(\mathbb{R}^3)\xrightarrow{}V(\mathbb{R}^3)$, obeys: 
\begin{equation}
\label{eq:2.20}
    \mathcal{M}'\boldsymbol{A}=\mathcal{M}_M(A_1(\mathcal{M}_M\boldsymbol{x}),A_2(\mathcal{M}_M\boldsymbol{x}),A_3(\mathcal{M}_M\boldsymbol{x}))=[-\boldsymbol{A}]_R
\end{equation} 
with $\mathcal{M}_M$ one 3-dimensional specific matrix representation verifying $\mathcal{M}_M^2=I$. 
\end{prop}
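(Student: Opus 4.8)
The plan is to recognize that Proposition~\ref{prop:2.1} is a coordinate-free repackaging of the componentwise conditions already obtained in Propositions~\ref{prop:nec} and \ref{prop:necbis}, together with their swap analogues \eqref{eq:2.15}--\eqref{eq:2.15bis}, unified through the single involutive matrix $\mathcal{M}_M$. Accordingly I would first fix precisely the meaning of the induced map $\mathcal{M}'$ on the vector-field space $V(\mathbb{R}^3)$, and then establish the two implications of the ``if and only if'' by rewriting those scalar conditions in matrix form and exploiting $\mathcal{M}_M^2 = I$. Since $\mathcal{M}$ acts on each particle subspace through the spatial block $\mathcal{M}_M$ (one of the diagonal matrix \eqref{eq:2.5} or the swaps \eqref{eq:2.6}--\eqref{eq:2.8}), and since $\boldsymbol{A}$ is transported as a genuine vector, the induced push-forward is $(\mathcal{M}'\boldsymbol{A})(\boldsymbol{x}) = \mathcal{M}_M\,\boldsymbol{A}(\mathcal{M}_M^{-1}\boldsymbol{x}) = \mathcal{M}_M\,\boldsymbol{A}(\mathcal{M}_M\boldsymbol{x})$, the last equality using $\mathcal{M}_M^{-1} = \mathcal{M}_M$. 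Its $k$-th component is $\sum_j (\mathcal{M}_M)_{kj}\, A_j(\mathcal{M}_M\boldsymbol{x})$, which is exactly the middle member of \eqref{eq:2.20}.

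For the forward implication (TRI $\Rightarrow$ \eqref{eq:2.20}) I would treat the diagonal and swap cases in parallel. In the diagonal case $\mathcal{M}_M = \mathrm{diag}(s_1,s_2,s_3)$, so $A_k(\mathcal{M}_M\boldsymbol{x}) = A_k^T$ and $\mathcal{M}'\boldsymbol{A} = (s_1 A_1^T, s_2 A_2^T, s_3 A_3^T)$; by the gauge-admitting condition \eqref{eq:2.12bis}, rewritten in vector form as \eqref{eq:2.18}, this equals $-(\boldsymbol{A}+\nabla G)$, an element of the class $[-\boldsymbol{A}]$. In the swap case, say $\mathcal{M}_M = \mathcal{M}_{xy}$, condition \eqref{eq:2.15bis} is precisely \eqref{eq:2.19}; multiplying both sides on the left by $\mathcal{M}_{xy}$ and invoking $\mathcal{M}_{xy}^2 = I$ turns the left-hand side into $\mathcal{M}_M(A_1(\mathcal{M}_M\boldsymbol{x}),A_2(\mathcal{M}_M\boldsymbol{x}),A_3(\mathcal{M}_M\boldsymbol{x})) = \mathcal{M}'\boldsymbol{A}$ and the right-hand side into $-(\boldsymbol{A}+\nabla G)$, which is again \eqref{eq:2.20}. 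The identical manipulation applies to $\mathcal{M}_{yz}$ and $\mathcal{M}_{xz}$.

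For the converse I would simply reverse the chain. If $\mathcal{M}'\boldsymbol{A} \in [-\boldsymbol{A}]$, then by definition of the equivalence class there exists a twice-differentiable $G$ with $\mathcal{M}'\boldsymbol{A} = -(\boldsymbol{A}+\nabla G)$; reading this identity componentwise and, in the swap case, multiplying back by $\mathcal{M}_M$, one recovers exactly the scalar conditions \eqref{eq:2.12bis} or \eqref{eq:2.15bis}. By Propositions~\ref{prop:nec}--\ref{prop:necbis} and their swap analogue, these conditions render the kinetic term \eqref{eq:2.9bis}, hence the whole Hamiltonian \eqref{eq:2.2} with $U=0$, invariant under $\mathcal{M}$, so \eqref{timerev1} holds and the system is TRI.

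The only genuinely delicate step is the bookkeeping in defining $\mathcal{M}'$: one must be careful that $\boldsymbol{A}$ is transported as a vector, with its components mixed by $\mathcal{M}_M$ and \emph{not} by the pseudo-vector rule $\det(\mathcal{M}_M)\,\mathcal{M}_M$, that the argument is evaluated at $\mathcal{M}_M\boldsymbol{x}$, and that ``equality up to gauge'' is interpreted as equality of classes, i.e. existence of a suitable $G$. Once $\mathcal{M}_M^2 = I$ is used to absorb the matrix acting on the transformed field, the diagonal and swap cases collapse into the single compact identity \eqref{eq:2.20}, and the equivalence follows with no further computation.
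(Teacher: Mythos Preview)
Your proposal is correct and follows essentially the same route as the paper: the forward direction is obtained from \eqref{eq:2.18} and \eqref{eq:2.19} exactly as you describe, and the converse establishes invariance of the kinetic term from \eqref{eq:2.20} together with $\mathcal{M}_M^2=I$. The only cosmetic difference is that the paper carries out the converse via a single matrix computation $\mathcal{M}(\boldsymbol{p}-q\boldsymbol{A}(\boldsymbol{x}))^2=(-\mathcal{M}_M\boldsymbol{p}-q\boldsymbol{A}(\mathcal{M}_M\boldsymbol{x}))^2=(\boldsymbol{p}-q[\boldsymbol{A}(\boldsymbol{x})]_R)^2$ valid uniformly for all $\mathcal{M}_M$, rather than reducing back to the componentwise conditions \eqref{eq:2.12bis} and \eqref{eq:2.15bis} before invoking Propositions~\ref{prop:nec}--\ref{prop:necbis}.
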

\noindent
When this is verified, the Hamiltonian is preserved up to a gauge transformation and the corresponding equations of motion are in turn verified.
\begin{proof}
The direct implication directly comes from Equations \eqref{eq:2.18} and   \eqref{eq:2.19}, where the invariance of the equations of motion leads to the condition \eqref{eq:2.20}. Vice versa, assuming  there is an involution $\mathcal{M}'$ verifying Equation \eqref{eq:2.20}, with 3-dimensional matrix representation $\mathcal{M}_M$, one can introduce
the transformation $\mathcal{M}\equiv(\mathcal{M}_M\boldsymbol{x},-\mathcal{M}_M\boldsymbol{p})$, which preserves the structure of the Hamilton equations under time reversal, because it alternates signs. Furthermore, the Hamiltonian is unchanged under the application of $\mathcal{M}$ to every particle space, since  $\mathcal{M}(\boldsymbol{p}-q\boldsymbol{A}(\boldsymbol{x}))^2=(-\mathcal{M}_M\boldsymbol{p}-q\boldsymbol{A}(\mathcal{M}_M\boldsymbol{x}))^2$ by definition. Using \eqref{eq:2.20} and $\mathcal{M}_M^2=I$ we obtain $\mathcal{M}(\boldsymbol{p}-q\boldsymbol{A}(\boldsymbol{x}))^2=(\boldsymbol{p}-q[\boldsymbol{A}(\boldsymbol{x})]_R)^2$.
\end{proof}

\begin{rem}
\label{rem:2.1}
{\it 
Applying $\mathcal{M}$ as a variable change in the integral \eqref{correl:2} deeply differs from  inverting $\boldsymbol{B}$. The~coordinates swap operated by $\mathcal{M}$ may amount to a mere rearrangement of the order in which the contributions to the integral coming from the different regions of the phase space are summed up, that does not affect the total. That depends on the functions that are integrated. 
For instance, given an average electric current from left to right, corresponding to a forward trajectory of particles, its time reverse may exist even if the particles do not trace backward the configurations of the forward trajectory; a reversed average of momenta suffices.}
\end{rem}

\begin{rem}
\label{rem:2.1bis}
{\it
Remark \ref{rem:2.1} rests on the hypothesis that all coordinate transformations of interest map the domain of integration on itself. Depending on the geometry of interest, a coordinate change may kick some particle out of the volume occupied by the system under 
investigation. As long as one remains within the realm of infinite homogeneous systems, or far from possible boundaries, as common in response theory, this is not an issue. In~general, one has to consider case by case whether the phase space is invariant under the chosen time reversal mapping. If the dynamics is not translation invariant, making all time reversal symmetries fail, in principle one obtains a method to experimentally find a violation of Onsager reciprocal relations.}
\end{rem}
To test the condition of Proposition \ref{prop:2.1}, it suffices to check that the curl of $\boldsymbol{A}$ and of $\mathcal{M}'\boldsymbol{A}$ corresponds {to $\boldsymbol{B}$ and $-\boldsymbol{B}$, respectively}. For example, take a constant magnetic field with gauge choices $\boldsymbol{A}_1(x,y,z)=(0,{A_0 x},0)$ and $\boldsymbol{A}_2(x,y,z)=(-{A_0 y},0,0)$, which are elements of the same class $\left[\boldsymbol{A}\right]$.
Applying the transformation of Equation \eqref{eq:2.6} with $s_P=1$ and $s_3=1$, one obtains $\boldsymbol{A}'_2(x,y,z)=(0,-{A_0 x},0)$ that does not equal $-\boldsymbol{A}_2(x,y,z)$, but equals $-\boldsymbol{A}_1(x,y,z)$, showing that it nevertheless belongs to the class $\left[-\boldsymbol{A}\right]$. Thus, the transformation of Equation \eqref{eq:2.6} satisfies the  necessary condition \eqref{eq:2.20} for TRI.

\subsection{Magnetic field}
\label{sec:Magnetic Field}
Proposition \ref{prop:2.1} can be formulated in an equivalent form that does not involve gauge freedom:
\begin{prop}
\label{prop:2.2}
A generalized time reversal operation $\mathcal{M}$ of form \eqref{eq:2.4}, that acts on all particles 6-dimensional subspaces, preserves TRI in the presence of a magnetic field $\boldsymbol{B}$ if and only if the associated transformation defined on the 3-dimensional vector field space, $\mathcal{M}':V(\mathbb{R}^3)\xrightarrow{}V(\mathbb{R}^3)$, obeys:
\begin{equation}
\label{eq:2.23}
    \mathcal{M}'\boldsymbol{B}=det(\mathcal{M}_M)\mathcal{M}_M(B_1(\mathcal{M}_M\boldsymbol{x}),B_2(\mathcal{M}_M\boldsymbol{x}),B_3(\mathcal{M}_M\boldsymbol{x}))=-\boldsymbol{B}
\end{equation} 
with $\mathcal{M}_M$ the 3-dimensional specific matrix representation verifying $\mathcal{M}_M^2=I$. 
\end{prop}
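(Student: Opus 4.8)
The plan is to obtain \eqref{eq:2.23} directly from the gauge-level characterization already proved in Proposition \ref{prop:2.1}, by taking the curl of the vector-potential condition \eqref{eq:2.20} and exploiting the fact that $\boldsymbol{B}=\nabla\times\boldsymbol{A}$ transforms as a pseudovector rather than as a vector. Proposition \ref{prop:2.1} states that $\mathcal{M}$ preserves TRI if and only if $\mathcal{M}'\boldsymbol{A}=[-\boldsymbol{A}]_R$, i.e. $\mathcal{M}'\boldsymbol{A}=-\boldsymbol{A}+\nabla G$ for some twice differentiable scalar $G$. Applying $\nabla\times$ to both sides and using $\nabla\times\nabla G=0$ immediately gives $\nabla\times(\mathcal{M}'\boldsymbol{A})=-\nabla\times\boldsymbol{A}=-\boldsymbol{B}$. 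The whole statement then reduces to showing that $\nabla\times(\mathcal{M}'\boldsymbol{A})$ coincides with the middle member of \eqref{eq:2.23}, namely $\det(\mathcal{M}_M)\mathcal{M}_M(B_1(\mathcal{M}_M\boldsymbol{x}),B_2(\mathcal{M}_M\boldsymbol{x}),B_3(\mathcal{M}_M\boldsymbol{x}))$.

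First I would establish the transformation law for the curl. Writing $(\mathcal{M}'\boldsymbol{A})_k(\boldsymbol{x})=(\mathcal{M}_M)_{kl}A_l(\mathcal{M}_M\boldsymbol{x})$ and applying the chain rule (noting $\partial_{x_j}(\mathcal{M}_M\boldsymbol{x})_m=(\mathcal{M}_M)_{mj}$), a component computation with the Levi-Civita symbol yields
\[
[\nabla\times(\mathcal{M}'\boldsymbol{A})]_i(\boldsymbol{x})=\epsilon_{ijk}\,(\mathcal{M}_M)_{mj}\,(\mathcal{M}_M)_{kl}\,(\partial_m A_l)(\mathcal{M}_M\boldsymbol{x}).
\]
The key algebraic input is the identity $\epsilon_{ijk}(\mathcal{M}_M)_{ip}(\mathcal{M}_M)_{jq}(\mathcal{M}_M)_{kr}=\det(\mathcal{M}_M)\,\epsilon_{pqr}$, valid for any orthogonal matrix; combined with orthogonality and with the symmetry $\mathcal{M}_M=\mathcal{M}_M^{T}$ forced by $\mathcal{M}_M^2=I$, it collapses the two contracted factors of $\mathcal{M}_M$ into a single factor $\det(\mathcal{M}_M)\mathcal{M}_M$ acting on $\boldsymbol{B}=\nabla\times\boldsymbol{A}$, giving
\[
\nabla\times(\mathcal{M}'\boldsymbol{A})=\det(\mathcal{M}_M)\,\mathcal{M}_M\,\boldsymbol{B}(\mathcal{M}_M\boldsymbol{x}).
\]
This right-hand side is exactly $\mathcal{M}'\boldsymbol{B}$ as defined in \eqref{eq:2.23}, so the forward implication follows: TRI forces $\mathcal{M}'\boldsymbol{B}=-\boldsymbol{B}$.

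For the converse I would run the argument backwards. Assuming \eqref{eq:2.23}, the same curl law gives $\nabla\times(\mathcal{M}'\boldsymbol{A})=-\boldsymbol{B}=\nabla\times(-\boldsymbol{A})$, so the two fields $\mathcal{M}'\boldsymbol{A}$ and $-\boldsymbol{A}$ have identical curl. Their difference is therefore irrotational, and since it is defined on (simply connected) $\mathbb{R}^3$ the Poincaré lemma guarantees it is the gradient of a scalar, i.e. $\mathcal{M}'\boldsymbol{A}=-\boldsymbol{A}+\nabla G=[-\boldsymbol{A}]_R$. By Proposition \ref{prop:2.1} this is equivalent to TRI, closing the equivalence.

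I expect the main obstacle to be the clean derivation of the curl transformation law and, in particular, correctly tracking the single factor $\det(\mathcal{M}_M)$ that distinguishes the pseudovector behaviour of $\boldsymbol{B}$ from the vector behaviour of $\boldsymbol{A}$; this is precisely where the Levi-Civita identity for orthogonal transformations must be used with care about which indices are contracted. A secondary subtlety lies in the converse step, where recovering gauge equivalence of the potentials from equality of the fields relies on simple connectedness of the domain; on a non-simply-connected region this inference could fail, a caveat consistent with Remark \ref{rem:2.1bis}.
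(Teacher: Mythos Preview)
Your proposal is correct and follows essentially the same route as the paper: the paper's proof is a one-sentence remark that \eqref{eq:2.20} and \eqref{eq:2.23} are equivalent because $\boldsymbol{B}=\nabla\times\boldsymbol{A}$ transforms as a pseudovector, while you spell out explicitly the Levi-Civita computation that produces the $\det(\mathcal{M}_M)$ factor and invoke the Poincar\'e lemma for the converse. One minor phrasing point: the symmetry $\mathcal{M}_M=\mathcal{M}_M^{T}$ is not forced by $\mathcal{M}_M^2=I$ alone but by that together with orthogonality (which the signed-permutation matrices \eqref{eq:2.5}--\eqref{eq:2.8} obviously satisfy); you already mention orthogonality, so this is just a matter of attributing the conclusion to the right hypotheses.
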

\begin{proof}
The derivation is trivial because \eqref{eq:2.20} and \eqref{eq:2.23} are equivalent statements by definition of  {a magnetic field} as curl of  {vector potential}, which transforms as a pseudo-vector in 3D space.
\end{proof}

Again, TRI preserves the Hamiltonian, up to a gauge choice, as well as the corresponding equations of motion.
This perspective is particularly useful in classical mechanics, in which only the magnetic field matters, because the equations of motion are the fundamental element of the theory. 

Now, given a magnetic field $\boldsymbol{B}(\boldsymbol{x})$, the necessary conditions for a transformation 
to preserve TRI are obtained from Equations \eqref{eq:2.5}, \eqref{eq:2.6}, \eqref{eq:2.7} or \eqref{eq:2.8}. To do that for the 20 transformations we have got, let us express $\boldsymbol{B}$ in the basis $\boldsymbol{\hat i},
\boldsymbol{\hat j},\boldsymbol{\hat k}$ 
of the 3-dimensional space as:
\begin{equation}
    \label{eq:2.24}
    \boldsymbol{B}=B_1(\boldsymbol{x})\boldsymbol{\hat i}+B_2(\boldsymbol{x})\boldsymbol{\hat j}+B_3(\boldsymbol{x})\boldsymbol{\hat k}
\end{equation}
and take the diagonal transformations with matrix representation \eqref{eq:2.5}. Following the rule \eqref{eq:2.23}, $\boldsymbol{B}$~transforms as: 
\begin{equation}
    \label{eq:2.25}
    \boldsymbol{B}'=s_1s_2s_3[s_1B_1(s_1x,s_2y,s_3z)\boldsymbol{\hat i}+s_2B_2(s_1x,s_2y,s_3z)\boldsymbol{\hat j}+s_3B_3(s_1x,s_2y,s_3z)\boldsymbol{\hat k}]
\end{equation}

Then, the necessary matching conditions between the magnetic field components and the transformation  follow from the second equality of \eqref{eq:2.23}, and write:
\begin{equation}
\label{eq:2.26}
\begin{cases}
B_1(x,y,z)=-s_2s_3B_1(s_1x,s_2y,s_3z)\\
B_2(x,y,z)=-s_1s_3B_2(s_1x,s_2y,s_3z) \\
B_3(x,y,z)=-s_1s_2B_3(s_1x,s_2y,s_3z)
\end{cases}
\end{equation}

Therefore, given the magnetic field, one can verify by inspection which of the eight diagonal transformations yield TRI. The same reasoning can be repeated for the non diagonal transformations, with representations \eqref{eq:2.6}, \eqref{eq:2.7} or \eqref{eq:2.8}, whose application to \eqref{eq:2.24} implies:
\begin{equation}
\label{condit1}
\boldsymbol{B}'_{xy}=-s_3[s_PB_2(s_Py,s_Px,s_3z)\boldsymbol{\hat i}+s_PB_1(s_Py,s_Px,s_3z)\boldsymbol{\hat j}+s_3B_3(s_Py,s_Px,s_3z)\boldsymbol{\hat k}]
\end{equation}
\begin{equation}
\label{condit2}
\boldsymbol{B}'_{yz}=-s_1[s_1B_1(s_1x,s_Pz,s_Py)\boldsymbol{\hat i}+s_PB_3(s_1x,s_Pz,s_Py)\boldsymbol{\hat j}+s_PB_2(s_1x,s_Pz,s_Py)\boldsymbol{\hat k}]
\end{equation}
\begin{equation}
\label{condit3}
\boldsymbol{B}'_{xz}=-s_2[s_PB_3(s_Pz,s_2y,s_Px)\boldsymbol{\hat i}+s_2B_2(s_Pz,s_2y,s_Px)\boldsymbol{\hat j}+s_PB_1(s_Pz,s_2y,s_Px)\boldsymbol{\hat k}]\end{equation}
where the subscripts identify the transformation. This derives from the fact that the determinant of the matrices \eqref{eq:2.6}, \eqref{eq:2.7} and \eqref{eq:2.8} equals the opposite of the diagonal element: $-s_P^2s_i=-s_i$. Then, the~necessary matching conditions for the 12 non-diagonal reversal operators write: 
\begin{equation}
\label{eq:2.27}
\begin{cases}
B_1(x,y,z)=s_3s_PB_2(s_Py,s_Px,s_3z)\\
B_2(x,y,z)=s_3s_PB_1(s_Py,s_Px,s_3z) \\
B_3(x,y,z)=B_3(s_Py,s_Px,s_3z)
\end{cases}
\end{equation}
\begin{equation}
\label{eq:2.28}
\begin{cases}
B_1(x,y,z)=B_1(s_1x,s_Pz,s_Py)\\
B_2(x,y,z)=s_1s_PB_3(s_1x,s_Pz,s_Py) \\
B_3(x,y,z)=s_1s_PB_2(s_1x,s_Pz,s_Py)
\end{cases}
\end{equation}
\begin{equation}
\label{eq:2.29}
\begin{cases}
B_1(x,y,z)=s_2s_PB_3(s_Pz,s_2y,s_Px)\\
B_2(x,y,z)=B_2(s_Pz,s_2y,s_Px) \\
B_3(x,y,z)=s_2s_PB_1(s_Pz,s_2y,s_Px)
\end{cases}
\end{equation}

This concludes the case of systems with $U(\boldsymbol{X},\boldsymbol{P},\boldsymbol{C})=0$ in the Hamiltonian. For $U(\boldsymbol{X},\boldsymbol{P},\boldsymbol{C}) \ne 0$, TRI requires also the following:
\begin{equation}
\label{eq:2.41}
   \mathcal{M}U(\boldsymbol{X},\boldsymbol{P},\boldsymbol{C})=U(\mathcal{M}_C\boldsymbol{X},-\mathcal{M}_C\boldsymbol{P},\boldsymbol{C})=U(\boldsymbol{X},\boldsymbol{P},\boldsymbol{C})
\end{equation}
where $\mathcal{M}$ is a time reversal transformation on the phase space, obtained by applying a given $\mathcal{M}_C$ to the coordinates, and alternating signs with the momenta. 
Let us begin introducing a force $\boldsymbol E$ deriving from a scalar potential $\Phi$ that depends only on coordinates, so that $-\nabla\Phi=\boldsymbol F$, and the Hamiltonian~reads:
\begin{equation}
    \label{eq:2.42}
    H=\sum_i^N\left[{\frac {[\boldsymbol {p}_i -q_i\boldsymbol {A}(x_i,y_i,z_i) )]^{2}}{2m_i}} + \Phi(x_i,y_i,z_i)\right]
\end{equation}

Given a transformation $\mathcal{M}$ that satisfies the conditions of Proposition \ref{prop:2.2}, the Hamiltonian \eqref{eq:2.42} results invariant under the application of $\mathcal{M}$ if:
\begin{equation}
    \label{eq:2.43}
     \mathcal{M}\Phi(\boldsymbol{X})=\Phi(\mathcal{M}_C\boldsymbol{X})=\Phi(\boldsymbol{X})
\end{equation}
and $\mathcal{M}_C$ is used as in Equation \eqref{eq:2.41} ({\em n.b.} this includes the notable case of the coupling with an electric field). In the following Section, we investigate notable examples of force potentials.

\subsection{Force Potentials}
\label{sec:Force potentials}
In this Section we consider physically relevant inter-particle potentials. Without loss of generality, we take a constant magnetic fields along the $z$ axis,  i.e., $\boldsymbol{B}=(0,0,1)$, which breaks four of the eight diagonal time reversal symmetries.
In turn, the conditions \eqref{eq:2.27}, \eqref{eq:2.28} and \eqref{eq:2.29} imply that only the four non diagonal operations \eqref{eq:2.6} yield TRI, producing a total of eight time reversal symmetries.

\begin{exmp}
\normalfont
Take a central potential, e.g., the Coulomb potential between charged particles:
\begin{equation}
    \label{eq:2.44}
    U(\boldsymbol{X},\boldsymbol{P},\boldsymbol{C})=\sum_{i<j}^N f_{ij}(\boldsymbol{C})u(r_{ij}) \, ; \quad
r_{ij}=\sqrt{(x_i-x_j)^2+(y_i-y_j)^2+(z_i-z_j)^2} \, ,
\end{equation}
$r_{ij}$ being the distance between particle $i$ and particle $j$, $\boldsymbol{C}$ a vector of parameters, and $f_{ij}$ a function of such parameters. This potential satisfies the condition \eqref{eq:2.43} because each of the 20 available transformations $\mathcal{M}_C$ is an element of the orthogonal group $O(3)$. In particular, one may take block diagonal operators with $3\times3$ blocks given by \eqref{eq:2.5}, \eqref{eq:2.6}, \eqref{eq:2.7} or \eqref{eq:2.8}. 
As a consequence, $r_{ij}$ is left unchanged by the action of $\mathcal{M}$ on the phase space. Moreover, $\mathcal{M}_C$ does not act on the space of the parameters $\boldsymbol{C}$, leaving each $f_{ij}$ invariant. 

While very simple, the potentials of this form are most common and useful; in particular, interactions between structureless objects are commonly modelled by central forces, such as those derived from Lennard--Jones, Morse, Coulomb, gravitational and Yukawa potentials.\end{exmp}
\begin{exmp}
\normalfont
The Coulomb ring-shaped (or Hartmann) potential treated in Ref. \cite{ring}
\begin{equation}
    \label{eq:2.46}
    U(x_i,y_i,z_i)=-\frac{Z}{\sqrt{x_i^2+y_i^2+z_i^2}}+\frac{1}{2}Q\frac{1}{x_i^2+y_i^2}\;\;\;\;\; Q>0 \,, \,\, Z>0
\end{equation}
is used in quantum mechanics, and can be used to model a force field that is not purely central, thanks to its second addend, that depends on the square distance from $z$ axis. Here, the term $x_i^2+y_i^2$ is  invariant under the action of the 8 possible diagonal transformations; in particular, we have:
\begin{equation}
    \label{eq:2.47}
    (s_1x_i)^2+(s_2y_i)^2=x_i^2+y_i^2
\end{equation}

In addition, for the non-diagonal transformations of the form \eqref{eq:2.6}, we have:
\begin{equation}
    \label{eq:2.48}
    (s_Py_i)^2+(s_Px_i)^2=x_i^2+y_i^2
\end{equation}

In conclusion, this kind of potential does not add restrictions to TRI, other than those imposed by the  magnetic~field.
\end{exmp}
\begin{exmp}
\normalfont
A different kind of potentials, used, e.g., in molecular dynamics, depends on momenta. For~instance, in~Ref.~\cite{ferm}, classical Fermion-like particles are simulated with the following potential:
\begin{equation}
\label{eq:2.45}
    U(\boldsymbol{p}_i)=\frac{E_p}{1+e^{b_p (|\boldsymbol{p}_i|^2-1)}}
\end{equation} 
where $E_p$ and $b_p$ are dimensional constants, while $\boldsymbol{p}_i=(p_i^x,p_i^y,p_i^z)$. In this case, the particles are decoupled, but they are subject to an external momentum dependent force. TRI, hence its consequences such as Onsager reciprocal relations, may hold even in a system like this, if the functional form of the magnetic field allows, because $|\boldsymbol{p}_i|$ is invariant under rotations.
\end{exmp}

\begin{exmp}
\normalfont
The Polarisable Ion Model (PIM) potential, is particularly interesting in molecular dynamics studies, to take into account certain intermolecular interactions cf.\  Refs. \cite{ishii2015transport,tesson2016classical}. In the case 
of an $N$ particles system, it is expressed by:
\begin{equation}
\label{eq:2.49}
U=U_{charge}+U_{dispersion}+U_{repulsion}+U_{polarization}
\end{equation}
where
\begin{equation}
\label{eq:2.50}
U_{charge}=\sum_i \sum_{j>i}\frac{q_i q_j}{r_{ij}}
\end{equation}
is the Coulomb electric potential,
\begin{equation}
\label{eq:2.51}
U_{dispersion}=-\sum_i\sum_{j>i}\left(\frac{C^{ij}_6}{(r_{ij})^6}f_6^{ij}(r_{ij})+\frac{C^{ij}_8}{(r_{ij})^8}f_8^{ij}(r_{ij})\right)
\end{equation}
is due to dipole-dipole and dipole-quadrupole dispersion,
\begin{equation}
\label{eq:2.52}
U_{repulsion}=\sum_i\sum_{j>i}B_{ij}e^{-\alpha_{ij}r_{ij}}
\end{equation}
is a short-range repulsion term, and
\begin{equation}
\label{eq:2.53}
\begin{split}
U_{polarization}=&\sum_i\sum_{j>i}\left(\frac{q_i \boldsymbol{r_{ij}\cdot\mu_j}}{(r_{ij})^3}f^{ij}_4(r_{ij})-\frac{q_j \boldsymbol{r_{ij}\cdot\mu_i}}{(r_{ij})^3}f^{ji}_4(r_{ij}) \right)+\\
&\sum_i\sum_{j>i}\left(\frac{\boldsymbol{\mu_i\cdot\mu_j}}{(r_{ij})^3}-\frac{3(\boldsymbol{r_{ij}\cdot\mu_i})(\boldsymbol{r_{ij}\cdot\mu_j})}{(r_{ij})^5} \right)+\sum_i\frac{|\boldsymbol{\mu_i}|^2}{2\alpha_i}
\end{split}
\end{equation}
is the polarization interaction term, with $\boldsymbol{\mu_i}$ the induced dipole moment of the molecule $i$. 
While the parts in Equations \eqref{eq:2.50}, \eqref{eq:2.51} and \eqref{eq:2.52} are like the potential \eqref{eq:2.44}, and are invariant under any time reversal operation, the term in Equation \eqref{eq:2.53} is hard to control, since it is defined recursively: for~any particle $i$, $\boldsymbol{\mu_i}$ in principle depends on the coordinates and on the dipole momenta of all the other particles. Explicitly expressing this dependence is problematic, and the verification of Equation~\eqref{eq:2.41} so far remains out of reach. In fact, this potential is only analyzed through approximations and~numerically.
\end{exmp}
\section{Conclusions}
\label{sec:Conclusions}

In this article, we have generalized the results of Refs. \cite{rondoe,rondom,bonella2017time,casa},  increasing the number of time reversal symmetries that concern mechanical systems in general, and systems in magnetic field, in~particular. We~focused on block diagonal transformations, composed by operations acting on the 6-dimensional subspace of each particle, and we have introduced suitable equivalence classes to account for the corresponding gauge invariance. We then obtained sufficient conditions for TRI to hold in presence of a magnetic field, which 
imply, for instance, Onsager reciprocal relations. Substantially enlarging the range of applicability of TRI, we contribute to understand why violations of such relations to date are not reported, despite the presence of magnetic fields.

The next step will be to investigate the necessary conditions for the validity of Onsager reciprocal relations. Indeed, as Ref. \cite{casa} states, the discovery of a violation of Onsager reciprocal relations may lead to the never observed situation of non-dissipative currents. This may be a dynamically indirect reason why Onsager reciprocal relations cannot be broken, at least in classical systems where the evidence of superconductivity was never found. 

In the final part of this paper, we have illustrated  the application of our results to notable potentials. Such a few examples do not exhaust the set of possible situations in which TRI holds or is violated, both theoretically and experimentally. However, it
covers typical situations and constitutes a guide for further investigations of the Onsager reciprocal relations.

 {As pointed out by one of the anonymous referees,
electromagnetism is inherently relativistic, hence in future works we may investigate the extension of our present results to the relativistic case. As a matter of fact, regarding the time reversal operations on the single particle subspace, thus of any set of non-interacting particles,
a formal extension of our involutions 
is immediate, although not necessarily conceptually satisfactory,
given the role of time in Minkowski space.
Moreover,
Statistical Mechanics relations, such as those considered in this paper, require interacting particles. This
makes
the subject most intriguing and challenging \cite{dunkel2009relativistic,mi2011introduction,AliR}.}

\vspace{6pt} 



\section*{Acknowledgement}
The authors are grateful to Alessandro Coretti, for carefully reading the original draft of this paper and for most insightful remarks.\\
LR has been partially supported by Ministero dell’Istruzione dell’Università e della Ricerca (MIUR) grant Dipartimenti di Eccellenza 2018–2022.





\begin{thebibliography}{999}
\providecommand{\natexlab}[1]{#1}

\bibitem{onsager1931reciprocal}
Onsager, L.
\newblock Reciprocal relations in irreversible processes. I.
\newblock {\em Phys. Rev.} {\bf 1931}, {\em 37},~405.

\bibitem{onsager1931reciprocal2}
Onsager, L.
\newblock Reciprocal relations in irreversible processes. II.
\newblock {\em Phys. Rev.} {\bf 1931}, {\em 38},~2265.

\bibitem{RevModPhys.17.343}
Casimir, H.B.G.
\newblock On Onsager's Principle of Microscopic Reversibility.
\newblock {\em Rev. Mod. Phys.} {\bf 1945}, {\em 17},~343--350.

\bibitem{kubo1957statistical}
Kubo, R.
\newblock Statistical-mechanical theory of irreversible processes. I. General
  theory and simple applications to magnetic and conduction problems.
\newblock {\em J. Phys. Soc. Jpn.} {\bf 1957}, {\em
  12},~570--586.

\bibitem{kubo1959some}
Kubo, R.
\newblock \emph{Some Aspects of the Statistical-Mechanical Theory of Irreversible
  Processes,} In:  \emph{Lecture in Theoretical Physics, } W. Brittin, Interscience:  New York, NY, USA,
   1959.


\bibitem{kubo1966fluctuation}
Kubo, R.
\newblock The fluctuation-dissipation theorem.
\newblock {\em Rep. Prog. phys.} {\bf 1966}, {\em 29},~255.

\bibitem{lax}
Lax, M.J.
\newblock {\em Symmetry Principles in Solid State and Molecular Physics}; John
  Wiley and Sons:
  New York, NY, USA, 1974.


\bibitem{rondoe}
Bonella, S.; Ciccotti, G.; Rondoni, L.
\newblock Time reversal symmetry in time-dependent correlation functions for
  systems in a constant magnetic field.
\newblock {\em EPL (Europhys. Lett.)} {\bf 2014}, {\em 108},~60004.

\bibitem{rondom}
Coretti, A.; Bonella, S.; Rondoni, L.; Ciccotti, G.
\newblock Time reversal and symmetries of time correlation functions.
\newblock {\em Mol. Phys.} {\bf 2018}, {\em 116},~3097--3103.

\bibitem{bonella2017time}
Bonella, S.; Coretti, A.; Rondoni, L.; Ciccotti, G.
\newblock Time-reversal symmetry for systems in a constant external magnetic
  field.
\newblock {\em Phys. Rev. E} {\bf 2017}, {\em 96},~012160.

\bibitem{casa}
Luo, R.; Benenti, G.; Casati, G.; Wang, J.
\newblock Onsager reciprocal relations with broken time-reversal symmetry.
\newblock {\em Phys. Rev. Res.} {\bf 2020}, {\em 2},~022009.

\bibitem{rondoquantu}
De~Gregorio, P.; Bonella, S.; Rondoni, L.
\newblock Quantum Correlations under Time Reversal and Incomplete Parity
  Transformations in the Presence of a Constant Magnetic Field.
\newblock {\em Symmetry} {\bf 2017}, {\em 9},~120.

\bibitem{jacquod2012onsager}
Jacquod, P.; Whitney, R.S.; Meair, J.; B{\"u}ttiker, M.
\newblock Onsager relations in coupled electric, thermoelectric, and~spin
  transport: The tenfold way.
\newblock {\em Phys. Rev. B} {\bf 2012}, {\em 86},~155118.

\bibitem{zirnbauer2010symmetry}
Zirnbauer, M.R.
\newblock Symmetry classes.
\newblock {\em arXiv } {\bf 2010}, arXiv:1001.0722.

\bibitem{Bettolo}
Marini Bettolo~Marconi, U.; Puglisi, A.; Rondoni, L.; Vulpiani, A.
\newblock Fluctuation–dissipation: Response theory in statistical physics.
\newblock {\em Phys. Rep.} {\bf 2008}, {\em 461},~111--195.

\bibitem{ring}
Ya{\c{s}}uk, F.; Berkdemir, C.; Berkdemir, A.
\newblock Exact solutions of the Schr{\"o}dinger equation with non-central
  potential by the Nikiforov--Uvarov method.
\newblock {\em J. Phys. A Math. Gen. } {\bf 2005}, {\em
  38},~6579.

\bibitem{ferm}
Cordero, P.; Hern{\'a}ndez, E.
\newblock Momentum-dependent potentials: Towards the molecular dynamics of
  fermionlike classical particles.
\newblock {\em Phys. Rev. E} {\bf 1995}, {\em 51},~2573.

\bibitem{ishii2015transport}
Ishii, Y.; Kasai, S.; Salanne, M.; Ohtori, N.
\newblock Transport coefficients and the Stokes--Einstein relation in molten
  alkali halides with polarisable ion model.
\newblock {\em Mol. Phys.} {\bf 2015}, {\em 113},~2442--2450.

\bibitem{tesson2016classical}
Tesson, S.; Salanne, M.; Rotenberg, B.; Tazi, S.; Marry, V.
\newblock Classical polarizable force field for clays: Pyrophyllite and talc.
\newblock {\em   J. Phys. Chem. C} {\bf 2016}, {\em
  120},~3749--3758.

\bibitem{dunkel2009relativistic}
Dunkel, J.; H{\"a}nggi, P.
\newblock Relativistic brownian motion.
\newblock {\em Phys. Rep.} {\bf 2009}, {\em 471},~1--73.

\bibitem{mi2011introduction}
Mi~Hakim, R.
\newblock {\em Introduction to Relativistic Statistical Mechanics: Classical
  and Quantum}; World Scientific:
  Singapore,
 2011.

\bibitem{AliR}
Aliano, A.; Rondoni, L.; Morriss, G.
\newblock Maxwell-J{\"u}ttner distributions in relativistic molecular dynamics.
\newblock {\em  Eur.~Phys. J. B-Condens. Matter Complex
  Syst.} {\bf 2006}, {\em 50},~361--365.

\end{thebibliography}

\end{document}